\title[Deception in Stochastic NES]{Stochastic Real-Time Deception in Nash Equilibrium Seeking for Games with Quadratic Payoffs}
\newtheorem{assumption}{Assumption}
\newcommand\ca{\mathcal{A}}
\newcommand\cb{\mathcal{B}}
\newcommand\ck{\mathcal{K}}
\author{%
 \Name{Michael Tang} \Email{myt001@ucsd.edu}\\
 \addr Department of Electrical and Computer Engineering, University of California, San Diego\\
 \Name{Miroslav Krstic} \Email{mkrstic@ucsd.edu}\\
 \addr Department of Mechanical and Aerospace Engineering, University of California, San Diego\\
 \Name{Jorge Poveda} \Email{poveda@ucsd.edu}\\
 \addr Department of Electrical and Computer Engineering, University of California, San Diego%
\thanks{This work was supported in part by NSF grants ECCS CAREER 2305756, CMMI 2228791, and AFOSR YIP: FA9550-22-1-0211}}
\begin{document}

\maketitle

\begin{abstract}%
 In multi-agent autonomous systems, deception is a fundamental concept which characterizes the exploitation of unbalanced information to mislead victims into choosing oblivious actions. This effectively alters the system's long term behavior, leading to outcomes that may be beneficial to the deceiver but detrimental to victim. We study this phenomenon for a class of model-free Nash equilibrium seeking (NES) where players implement independent stochastic exploration signals to learn the pseudogradient flow. In particular, we show that deceptive players who obtain real-time measurements of other players' stochastic perturbation can incorporate this information into their own NES action update, consequentially steering the overall dynamics to a new operating point that could potentially improve the payoffs of the deceptive players. We consider games with quadratic payoff functions, as this restriction allows us to derive a more explicit formulation of the capabilities of the deceptive players. By leveraging results on multi-input stochastic averaging for dynamical systems, we establish local exponential (in probability) convergence for the proposed deceptive NES dynamics. To illustrate our results, we apply them to a two player quadratic game.
\end{abstract}

\begin{keywords}%
  Deception, Nash equilibrium seeking, Multi-agent systems%
\end{keywords}

\section{Introduction}
\subsection{Motivation}
The study of data-driven multi-agent systems is becoming increasingly relevant, especially in engineering domains such as smart grids, power systems, robotics and machine learning. Game theory offers a powerful framework for analyzing the interactions between agents in cooperative and competitive settings. However, traditional game-theoretic methods mostly assume that all agents share the same amount of information for the purpose of decision-making. When agents gain access to \emph{privileged information}, traditional game-theoretic concepts, such as the Nash equilibrium \cite{540b73bd-a3f1-333e-a206-c24d0fbbb8bc}, may lose applicability. Such asymmetry of information can be exploited to shape the system's long-term behavior in a way that favors certain agents while disadvantaging others. In adversarial environments, some agents may leverage privileged information to mislead their opponents into adopting suboptimal policies. This phenomenon, commonly referred to as \emph{deception in games}, has received significant research interest partly due to the growing need for safe, resilient decision-making systems in game-theoretic and data-driven settings. Deception has been studied in many domains, such as cybersecurity \cite{pawlick2021game}, robotics \cite{9494340} and biological systems \cite{smith1987deception}. While deception can serve as a means by which one achieves their goal in a non-competitive environment, it has also been thoroughly studied in competitive settings, such as signaling games \cite{kouzehgar2019fuzzy}, repeated security games \cite{nguyen2019deception}, etc. Despite the wide variety of applications, a common theme is the asymmetry of information. For an agent to have the capacity to deceive, they must be given access to information that the other agents are unaware of. Considerable research has also been done to identify when to optimally deceive \cite{wagner2011acting}, how to detect deceptive agents \cite{9508449}, and strategies to counter deception \cite{5339236}. 

In this work, we are interested in studying deception within the framework of stochastic model-free Nash equilibrium seeking (NES) algorithms for non-cooperative games. In a non-cooperative game, each player seeks to minimize their cost function, which may depend on the actions of other players. The solution to this optimization problem, known as the \emph{Nash equilibrium (NE)}, is not straightforward to compute. Given this challenge, various NES algorithms have been designed, including distributed \cite{ye2017distributed}, semi-decentralized \cite{belgioioso2017semi} and hybrid \cite{9968117} algorithms. However, in many cases, players do not know the mathematical model of their cost function or the structure of the game. In this case, players must implement adaptive seeking dynamics that incorporate exploration and exploitation strategies using real-time measurements of their cost function values. Such algorithms have been studied in, e.g.,  \cite{6060862,PoQu15,Stipanovik_ES_TAC}, using extremum-seeking ideas. For this class of algorithms, it was recently shown in \cite{tang2024deception} that a player who obtains knowledge of another player's exploration policy can incorporate this information into their own action update, effectively driving the NES dynamics to a point that might be more beneficial for the deceptive player than the nominal NE. Moreover, conditions that guarantee stability of the closed-loop deceptive NES dynamics were derived. While the results raised important concerns regarding the functionality of real-time adaptive learning algorithms in the presence of adversarial behavior, one important limitation is that they rely on the players using deterministic periodic excitation signals to learn the ``deceptive'' Nash equilibrium. However, many practical systems, such as biological systems and source-seeking systems, rely on unpredictable and random probing to learn optimal operating points. To the best of our knowledge, the study of stochastic deception in Nash equilibrium-seeking problems has remained mostly unexplored.
\subsection{Contributions}
In this paper, we introduce the concept of \emph{stochastic deception} in Nash equilibrium seeking problems. In particular, by leveraging averaging results for stochastic differential equations \cite{LiuStochastic}, we extend the deterministic framework of \cite{tang2024deception} to the case where players use random probing for the purpose of exploration. While the type of deception considered in \cite{tang2024deception} only required knowledge of the victim's sinusoidal frequency (which could be interpreted as a ``key" in the learning mechanism), the deception mechanism in this work requires real-time measurements of the victim's stochastic perturbation. In other words, to achieve stochastic deception, the deceiver must be able to establish a sensing link with the victim's probing policy. By focusing on quadratic games with a strictly diagonally dominant pseudogradient, we establish suitable stability properties for the average dynamics, resulting in more explicit characterizations on the influence of deception compared to earlier works. We further generalize the class of deceptive NES algorithms from \cite{tang2024deception} by allowing players to select differing gains and amplitudes for their exploration policies. We then establish conditions which can guarantee stochastic exponential stability of the closed-loop stochastic deceptive NES dynamics. While the deterministic setting required players to choose different exploration frequencies, our work merely requires that each player's perturbation process is chosen independently, resulting in significantly greater flexibility for the players' strategies. All results are further illustrated via numerical examples.

\section{Preliminaries}
Given a matrix $A\in\mathbb{R}^{m\times n}$, we use $[A]_{ij}$ and $[A]_{k:}$ to denote the $(i,j)$ entry and $k$-th row of $A$, respectively. Similarly, given a vector $b\in\mathbb{R}^n$, we use $[b]_i$ to denote the $i$-th entry of $b$. If $N\in\mathbb{N}$, we use $[N]$ to denote the set of positive integers no greater than $N$, i.e $[N]=\{1,2,...,N\}$.
Given a vector $v\in\mathbb{R}^n$, we use $\text{diag}(v)\in\mathbb{R}^{n\times n}$ to denote the diagonal matrix with the 
$i$-th diagonal entry given by $[v]_i$. For $r>0$, we use $\mathcal{B}_r(v)$ to denote the open ball of radius $r$ around $v$, i.e $\mathcal{B}_r(v)=\{p\in\mathbb{R}^n : |p-v|<r\}$, where $|\cdot|$ is the Euclidean norm.
\section{Problem Formulation}
Consider an $N$-player noncooperative game, where player $i\in[N]$ implements action $x_i\in\mathbb{R}$ and aims to unilaterally minimize their cost function $J_i:\mathbb{R}^N\to\mathbb{R}$. We let $x=[x_1,...,x_N]^\top$ denote the vector of all the players' actions, and we use $x_{-i}$ to denote the vector of all players' actions except for that of player $i$. A policy $x^*\in\mathbb{R}^N$ is said to be a \emph{Nash equilibrium (NE)} of the game if it satisfies
\begin{equation}
    J_i(x_i^*, x_{-i}^*)\le J_i(x_i, x_{-i}^*),\quad \forall x_i\in\mathbb{R},\quad\forall i\in[N].
\end{equation}
In this paper, we are primarily interested in games with quadratic payoffs, i.e., we assume that for each player $i$, $J_i$ takes the quadratic form
\begin{equation}\label{jquad}
    J_i(x)=\frac12 x^\top A_i x+b_i^\top x+c_i
\end{equation}
where $A_i\in\mathbb{R}^{N\times N}$ is symmetric, $b_i\in\mathbb{R}^N$, and $c_i\in\mathbb{R}$. With this in mind, the \emph{pseudogradient} of the game can be defined as $\mathcal{G}(x)=\mathcal{A}x+\mathcal{B}$, where $\mathcal{A}\in\mathbb{R}^{N\times N}$ satisfies $[\mathcal{A}]_{i:}=[A_i]_{i:}$ and $\mathcal{B}\in\mathbb{R}^N$ satisfies $[\mathcal{B}]_i=[b_i]_i$ for $i\in[N]$. We make the following assumption on the pseudogradient
\begin{assumption}\label{diag}
    The matrix $\mathcal{A}$ is strictly diagonally dominant and satisfies $[\mathcal{A}]_{ii}>0$ for all $i\in[N]$.
\end{assumption}
For player $i$ to learn their NE strategy using only measurements of their payoff $J_i$, they implement the following stochastic NE seeking algorithm, originally introduced in \cite{doi:10.1137/100811738}:
\begin{subequations}\label{snes}
    \begin{align}
        x_i&=u_i+a_i f_i(\eta_i(t))\\
        \dot{u}_i&=-\frac{\gamma_i k_i}{a_i}f_i(\eta_i(t))J_i(x)
    \end{align}
\end{subequations}
where $a_i, k_i>0$ are gains and $f_i$ is a bounded odd function to be chosen by player $i$ with
\begin{equation}
    \gamma_i=\int_{-\infty}^{\infty} f_i^2(s)\frac{1}{\sqrt{\pi}q_i}e^{-\frac{s^2}{q_i^2}} ds
\end{equation} 
where $q_i>0$. The signals $\eta_i(t), i\in[N]$ are selected to be independent time homogeneous continuous Markov ergodic processes chosen by player $i$, such as the Ornstein-Uhlenbeck (OU) process:
\begin{equation}\label{ou}
    \vartheta_i d\eta_i(t)=-\eta_i(t)dt+\sqrt{\vartheta_i}q_i dW_i(t)
\end{equation}
where $\vartheta_i>0$ and $W_i(t), i\in[N]$ are independent 1-dimensional standard Brownian motion on a complete probability space $(\Omega, \mathcal{F}, P)$ with sample space $\Omega$, $\sigma$-field $\mathcal{F}$, and probability measure $P$. We express $\vartheta_i$ in the form $\vartheta_i=\vartheta \overline{\vartheta}_i$
for some $\vartheta>0$, as this form is convenient for our stability analysis. Without loss of generality, we assume $\overline{\vartheta}_1=1$. Now we introduce the concept of \emph{stochastic deception} within the context of \eqref{snes}:
\begin{definition}
    A player $i$ is said to be \emph{stochastically deceptive} towards a set of players $\mathcal{D}_i\subset [N]\setminus \{i\}$ if its actions are updated according to the following dynamics:
   \begin{subequations}\label{sdnes}
        \begin{align}
            x_i(t)&=u_i(t)+a_i f_i(\eta_i(t))+\delta_i\sum_{k\in\mathcal{D}_i}a_k f_k(\eta_k(t))\label{snesx}\\
            \dot{u}_i&=-\frac{\gamma_i k_i}{a_i}f_i(\eta_i(t))J_i(x)
        \end{align}
    \end{subequations}
    where $\delta_i(t)$ is a tunable deterministic deceptive gain that satisfies $\sup_{t\ge 0} |\delta_i(t)|>0$.
\end{definition}
\begin{figure}[t!]
  \centering
    \includegraphics[width=.95\textwidth]{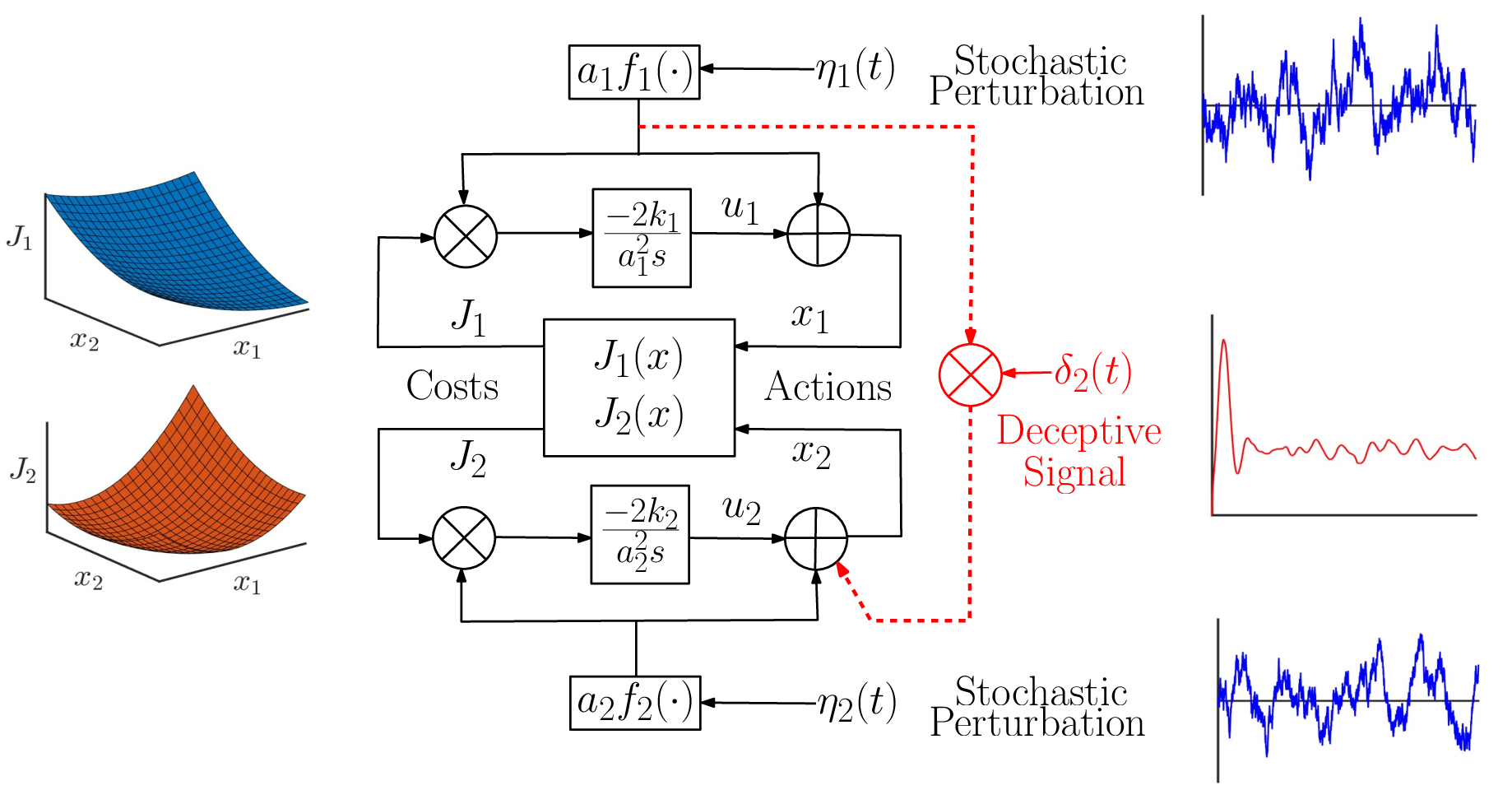}
    \caption{\small Block diagram depicting the deception mechanism for stochastic NES in a two player game where player 2 deceives player 1.}\label{sdnes_block}
    \vspace{-0.5cm}
\end{figure}
For a player to be stochastically deceptive to player $d$, they must obtain knowledge of player $d$'s amplitude $a_d$, their function $f_d$, and real-time measurements of player $d$'s stochastic exploration signal $\eta_d(t)$. While this may seem more restrictive than what is considered in \cite{tang2024deception}, which only required knowledge of exploration frequencies $\omega_d$, we note that our setting is in fact more lenient. Indeed, the NES dynamics in \cite{tang2024deception} assumed $a_k=a$ for all $k\in[N]$ and that all players use a deterministic sinusoidal exploration policy, i.e $f_k(t)=\sin(t)$ and $\eta_k(t)=\omega_k t$ for all $k\in[N]$. Hence, a player who is deceptive to player $d$ in \cite{tang2024deception} can compute $a_d f_d(\eta_d(t))$ in real-time, which is what we require in \eqref{snesx}. Figure \ref{sdnes_block} depicts this phenomenon for a two player game where player 2 is deceptive to player 1.

\vspace{0.1cm}
In this work, we assume there are $n$ deceptive players, and the set of deceptive players is given by $\mathcal{D}=\{d_1,...,d_n\}$. Moreover, we let $\mathcal{V}=\{j\in[N] : \exists i \in [N] \text{ s.t } j\in\mathcal{D}_i\}$ denote the set of ``victims", i.e the players who are being deceived. We say player $i$ is deceiving $n_i\in [N-1]$ players in $\mathcal{D}_i$ if they implement \eqref{sdnes} and update $\delta_i$ according to the following dynamics:
\begin{equation} \dot{\delta}_i=\varepsilon\varepsilon_i (J_i(x)-J_i^\text{ref}).\label{ddelta}
\end{equation}
where $J_i^\text{ref}$ is the desired payoff of player $i$ and $\varepsilon, \varepsilon_i>0$ are small parameters. Now, we can state the full model-free stochastic dynamics for the deceptive game:
\begin{subequations}\label{decgamedyn}
     \begin{align}
     x_i&=\begin{dcases}
        u_i+a_if_i(\eta_i(t))+\delta_i \sum_{j\in\mathcal{D}_i}a_jf_j(\eta_j(t)) & \text{if } i \in \mathcal{D} \\
        u_i+a_i f_i(\eta_i(t)) & \text{else}
    \end{dcases}\label{decgame1}\\
    \dot{u}_i&=-\frac{\gamma_i k_i}{a_i}J_i(x)f_i(\eta_i(t)),\label{decgame2}\\ \dot{\delta}_i&=\varepsilon\varepsilon_i (J_i(x)-J_i^\text{ref}),\quad i\in\mathcal{D}
 \end{align}
\end{subequations}
Let $\delta=[\delta_{d_1},...,\delta_{d_n}]^\top$ and let $\mathcal{K}_j$ represent the set of players who are deceptive to player $j$, i.e., $\mathcal{K}_j=\{i\in\mathcal{D}: j\in\mathcal{D}_i\}$. We use $\zeta\in\mathbb{R}^{N+n}$ to denote the full state vector, i.e $\zeta=[u^\top\quad \delta^\top]^\top$. We define the perturbed pseudogradient parameters as $\overline{\ca}(\delta)\in\mathbb{R}^{N\times N}$ and $\overline{\cb}(\delta)\in\mathbb{R}^N$, which have entries satisfying:
    \begin{align}\label{perturb_pg}
        [\overline{\ca}(\delta)]_{i:}=[\ca]_{i:}+\sum_{k\in\ck_i}\delta_k [A_i]_{k:},\quad [\overline{\cb}(\delta)]_i=[\cb]_i+\sum_{k\in\ck_i}\delta_k [b_i]_k
    \end{align}
The intuition behind \eqref{perturb_pg} will become clear when the averaged dynamics of \eqref{decgamedyn} are computed in Section \ref{sec_proof}. 
We can then define the following \emph{stability preserving set}
\begin{equation}
    \Delta:=\{\delta\in\mathbb{R}^n : -\text{diag}([k_1,...,k_N]^\top)\overline{\ca}(\delta) \text{ is Hurwitz}\}.
\end{equation}
By leveraging Assumption \ref{diag}, we obtain an estimate of $\Delta$ that is somewhat conservative but an improvement of the results from \cite{tang2024deception}.
\begin{lemma}\label{lemma_delta}
    Pick $r\in(0, \infty)$ satisfying
    \begin{equation}\label{deltabd}      r=\min_{i\in\mathcal{V}}\min\left(\frac{[\ca]_{ii}-\sum_{j\neq i}^N|[\ca]_{ij}|}{\sum_{k\in\ck_i}\sum_{j=1}^N |[A_i]_{kj}|}, \frac{[\ca]_{ii}}{\sum_{k\in\ck_i}|[A_i]_{ki}|}\right)
    \end{equation}
    where we interpret $\frac{p}{q}:=\infty$ if $p>0$ and $q=0$. Then $\mathcal{B}_r(\mathbf{0}_n)\subset\Delta$, where $\mathbf{0}_n\in\mathbb{R}^n$ is the $n$-dimensional zero vector.
\end{lemma}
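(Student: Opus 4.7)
The plan is to show that any $\delta$ with $|\delta|<r$ keeps the perturbed pseudogradient matrix $\overline{\ca}(\delta)$ strictly row diagonally dominant with positive diagonal entries, and then to invoke Gershgorin's circle theorem. Letting $D:=\text{diag}([k_1,\ldots,k_N]^\top)$, positivity of each $k_i$ means that $D\overline{\ca}(\delta)$ inherits both positivity of its diagonal and strict row diagonal dominance; hence every Gershgorin disc of $D\overline{\ca}(\delta)$ lies in the open right half-plane, so $-D\overline{\ca}(\delta)$ is Hurwitz and $\delta\in\Delta$. This reduces the whole lemma to a purely algebraic claim about $\overline{\ca}(\delta)$.

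I would verify the two algebraic properties row-by-row. For any $i\notin\mathcal{V}$ one has $\ck_i=\emptyset$, so by \eqref{perturb_pg} the $i$-th row of $\overline{\ca}(\delta)$ coincides with the $i$-th row of $\ca$ and already satisfies both properties by Assumption \ref{diag}. For $i\in\mathcal{V}$, I would apply the triangle inequality to each entry in \eqref{perturb_pg}, together with the coordinatewise bound $|\delta_k|\le|\delta|<r$ (valid because $|\cdot|$ is the Euclidean norm), to obtain
\begin{align*}
[\overline{\ca}(\delta)]_{ii} &\ge [\ca]_{ii}-|\delta|\sum_{k\in\ck_i}|[A_i]_{ki}|, \\
[\overline{\ca}(\delta)]_{ii}-\sum_{j\neq i}|[\overline{\ca}(\delta)]_{ij}| &\ge [\ca]_{ii}-\sum_{j\neq i}|[\ca]_{ij}|-|\delta|\sum_{k\in\ck_i}\sum_{j=1}^N|[A_i]_{kj}|.
\end{align*}
The two ratios inside the minimum in \eqref{deltabd} are exactly the thresholds on $r$ needed so that, when $|\delta|<r$, both right-hand sides are strictly positive for every $i\in\mathcal{V}$, completing the verification.

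The only real bookkeeping obstacle is the convention $p/q:=\infty$: it is activated precisely when $\ck_i$ contributes no mass to the term in question (for instance, when no deceiver's action reaches $i$'s own index), in which case the corresponding inequality holds trivially and the constraint on $r$ is vacuous. Aside from this case distinction and careful tracking of which sums range over $\ck_i$ versus all of $[N]$, the argument consists only of the triangle inequality and a direct application of Gershgorin's theorem to $D\overline{\ca}(\delta)$, which yields $\mathcal{B}_r(\mathbf{0}_n)\subset\Delta$ as claimed.
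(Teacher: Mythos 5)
Your proposal is correct and follows essentially the same route as the paper: bound the perturbed entries of $\overline{\ca}(\delta)$ via the triangle inequality and $|\delta_k|\le|\delta|$ to preserve strict row diagonal dominance with positive diagonal, then conclude via Gershgorin that $-\text{diag}([k_1,\ldots,k_N]^\top)\overline{\ca}(\delta)$ is Hurwitz. You are in fact slightly more explicit than the paper about the harmless scaling by the positive gains $k_i$ and about the rows with $\ck_i=\emptyset$, but the argument is the same.
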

\begin{proof}
    Fix $i\in\mathcal{V}$, and pick $\delta$ satisfying $|\delta|<\frac{[\ca]_{ii}}{\sum_{k\in\ck_i}|[A_i]_{ki}|}$. Then,
    \begin{align}
        [\overline{\ca}(\delta)]_{ii}&=[\ca]_{ii}+\sum_{k\in\ck_i}\delta_k [A_i]_{ki}\ge [\ca]_{ii}-|\delta|\sum_{k\in\ck_i}|[A_i]_{ki}|>0
    \end{align}
    Now let $\delta$ also satisfy $|\delta|<\frac{[\ca]_{ii}-\sum_{j\neq i}^N|[\ca]_{ij}|}{\sum_{k\in\ck_i}\sum_{j=1}^N |[A_i]_{kj}|}$. Then, for $\overline{\ca}(\delta)$ to be strictly diagonally dominant, we require
    \begin{align}    [\ca]_{ii}+\sum_{k\in\ck_i}\delta_k [A_i]_{ki}>\sum_{j\neq i}^N\left\lvert[\ca]_{ij}+\sum_{k\in\ck_i}\delta_k [A_i]_{kj}\right\rvert\label{lemd}
    \end{align}
    For \eqref{lemd} to hold, it suffices to have
    \begin{align}    [\ca]_{ii}-|\delta|\sum_{k\in\ck_i} |[A_i]_{ki}|>\sum_{j\neq i}^N \left(|[\ca]_{ij}|+|\delta|\sum_{k\in\ck_i}|[A_i]_{kj}|\right)
    \end{align}
    which can be guaranteed whenever
    \begin{align}
        [\ca]_{ii}-\sum_{j\neq i}^N|[\ca]_{ij}|>|\delta|\sum_{k\in\ck_i}\sum_{j=1}^N |[A_i]_{kj}|
    \end{align}
    Taking the minimum over all $i\in\mathcal{V}$, it follows that $\overline{\ca}(\delta)$ is strictly diagonally dominant for $|\delta|<r$, and $[\overline{\ca}(\delta)]_{ii}>0$ for all $i\in[N]$. By the Gershgorin Circle Theorem, we conclude that $\delta\in\Delta$.
\end{proof}
This estimate is a significant improvement compared to the results of \cite{tang2024deception}, which only guaranteed that $\Delta$ contains a neighborhood of the origin. By further exploiting the diagonal dominance assumption, we are able to derive an explicit lower bound on the ``size" of this neighborhood. To characterize when the deceptive players are able to achieve their desired payoffs via deception while preserving stability, we introduce the notion of \emph{attainability} as was presented in \cite{tang2024deception}:
\begin{definition}\label{jattaindef}
    A vector $J^{\text{ref}}=[J_{d_1}^{\text{ref}},...,J_{d_n}^{\text{ref}}]^\top$ is said to be attainable if there exists $\delta^*\in\Delta$ such that:
    \begin{enumerate}
        \item $J_{d_k}(-\overline{\ca}(\delta^*)^{-1}\overline{\cb}(\delta^*))=J_{d_k}^{\text{ref}},\quad\forall~~k\in [n]$.
        \item The matrix $\Lambda(\delta^*)\in\mathbb{R}^{n\times n}$ with $[\Lambda(\delta^*)]_{jk}=\nabla_j \xi_k(\delta^*)$ is Hurwitz, where $\xi_k:\mathbb{R}^{n}\to\mathbb{R}$ is given by $\xi_k(\delta):=\varepsilon_{d_k} J_{d_k}(-\overline{\ca}(\delta)^{-1}\overline{\cb}(\delta))$.
    \end{enumerate}
 We let $\mathcal{J}\subset\mathbb{R}^n$ denote the set of all \emph{attainable} vectors $J^{\text{ref}}=[J_{d_1}^{\text{ref}},...,J_{d_n}^{\text{ref}}]^\top$.
\end{definition}
We can now state the main result of this paper, which establishes stochastic stability of the NES dynamics \eqref{decgamedyn}.
\begin{theorem}\label{thm}
    Consider the stochastic NES dynamics \eqref{decgamedyn} where the costs $J_i$ are of the quadratic form \eqref{jquad}, $J^\text{ref}\in\mathcal{J}$, the pseudogradient matrix $\ca$ satisfies Assumption \ref{diag}, the noise $\eta_i(t)$ are generated according to \eqref{ou} and $f_i(\cdot)$ are bounded odd functions for $i\in [N]$. Then for each $\tilde{\varepsilon}>0$ there exists $a^*$ such that for $a_1,...,a_N$ satisfying $\max_i a_i<a^*$, there exists $\varepsilon^*$ such that for all $\varepsilon\in(0, \varepsilon^*)$ there exists $R>0, C>0, M>0$ and $\zeta^*=\begin{bmatrix}
        u^*\\ \delta^*
    \end{bmatrix}\in\mathbb{R}^{N+n}$ satisfying the following:
    \begin{enumerate}
        \item $|J_{d_k}(u^*)-J_{d_k}^\text{ref}|<\tilde{\varepsilon}$ for all $k\in [n]$
        \item For any $\tilde{r}>0$ and any initial condition $\zeta_0\in\mathbb{R}^{N+n}$ with $|\zeta_0-\zeta^*|<R$, the solution of \eqref{decgamedyn} satisfies
        \begin{equation}
            \lim_{\vartheta\to 0}\inf\{t\ge 0 : |\zeta(t)-\zeta^*|>C|\zeta_0-\zeta^*|e^{-Mt}+\tilde{r}\}=\infty\quad \text{a.s.}
        \end{equation}
        \item There exists $\varepsilon_0>0$ and a function $T:(0, \varepsilon_0)\to \mathbb{N}$ such that
        \begin{equation}
            \lim_{\vartheta\to 0} P(|\zeta(t)-\zeta^*|\le C|\zeta_0-\zeta^*|e^{-Mt}+\tilde{r}\quad \forall t\in [0, T(\vartheta)])=1 \quad\text{ with } \lim_{\vartheta\to 0}T(\vartheta)=\infty.
        \end{equation}
    \end{enumerate}
\end{theorem}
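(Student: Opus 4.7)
The plan is to prove the theorem via a two-step reduction: first apply multi-input stochastic averaging from \cite{LiuStochastic} to eliminate the fast OU exploration signals $\eta_i(t)$, and then analyze the resulting (deterministic) averaged system using singular perturbation between the $u$-dynamics (fast, order $k_i$) and the $\delta$-dynamics (slow, order $\varepsilon$). Since the $\eta_i$ are independent ergodic OU processes and the cost functions $J_i$ are quadratic, expanding $J_i(x)$ under the substitution \eqref{decgame1} produces only terms that are constant, linear, or quadratic in the $f_k(\eta_k(t))$, so averaging against the stationary invariant measures of the $\eta_k$ can be carried out in closed form. Cross terms $f_k(\eta_k)f_j(\eta_j)$ with $k\neq j$ average to zero by independence, while diagonal terms $f_k(\eta_k)^2$ integrate to $\gamma_k$ by the definition immediately following \eqref{snes}. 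Odd terms $f_k(\eta_k)$ average to zero since $f_k$ is odd and the OU stationary density is symmetric.

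The key computation is to check that, after averaging, the $i$-th row of the drift in the $u$-equation \eqref{decgame2} becomes $-k_i\bigl([\overline{\ca}(\delta)]_{i:}u + [\overline{\cb}(\delta)]_i\bigr) + O(\max_k a_k)$, which is exactly the motivation for the perturbed pseudogradient definitions \eqref{perturb_pg}: the extra term $\delta_k a_k f_k(\eta_k)$ injected into $x_i$ for $i\in\mathcal{D}$, $k\in\mathcal{D}_i$, produces (after multiplication by $f_i(\eta_i)$ and averaging) no contribution to the linear-in-$u$ part but, through the $\frac12 x^\top A_i x$ quadratic term in $J_i$, shifts the effective $[\mathcal{A}]_{i:}$ and $[\mathcal{B}]_i$ entries exactly by $\sum_{k\in\ck_i}\delta_k [A_i]_{k:}$ and $\sum_{k\in\ck_i}\delta_k [b_i]_k$ respectively. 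Thus the $u$-averaged dynamics reads $\dot{u}=-\text{diag}(k)\bigl(\overline{\ca}(\delta)u+\overline{\cb}(\delta)\bigr)+O(\max_k a_k)$, and the $\delta$-averaged dynamics reads $\dot{\delta}_i=\varepsilon\varepsilon_i\bigl(J_i(u)-J_i^{\text{ref}}\bigr)+O(\max_k a_k^2)$.

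Next I would use singular perturbation on this averaged system. For $\delta$ frozen in $\Delta$, the boundary layer $\dot u = -\text{diag}(k)\bigl(\overline{\ca}(\delta)u+\overline{\cb}(\delta)\bigr)$ is exponentially stable (since $\delta\in\Delta$) with quasi-steady state $h(\delta):=-\overline{\ca}(\delta)^{-1}\overline{\cb}(\delta)$. Substituting this into the slow equation yields the reduced system $\dot\delta_i=\varepsilon\xi_i(\delta)=\varepsilon\varepsilon_{d_i}\bigl(J_{d_i}(h(\delta))-J_{d_i}^{\text{ref}}\bigr)$, which by attainability (Definition \ref{jattaindef}) admits an equilibrium $\delta^*\in\Delta$ at which the Jacobian $\Lambda(\delta^*)$ is Hurwitz. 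Standard singular perturbation arguments then give local exponential stability of $(u^*,\delta^*)=(h(\delta^*),\delta^*)$ for the averaged system, with $|J_{d_k}(u^*)-J_{d_k}^{\text{ref}}|$ made arbitrarily small (in fact zero to leading order) by choosing $\max_k a_k$ small enough, establishing item 1.

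Finally, I would invoke the stochastic averaging theorems of \cite{LiuStochastic} that guarantee, on any compact set of initial conditions around an exponentially stable equilibrium of the averaged system, the original stochastic trajectory remains $\tilde r$-close to the averaged one with probability tending to one as the OU time-constant $\vartheta\to 0$, and for times diverging to infinity. This yields items 2 and 3 essentially verbatim in the statement of the theorem. The main obstacle I anticipate is bookkeeping in the averaging computation: one must carefully track which cross terms vanish by independence of $W_i$ and symmetry of $f_i$, verify that the $O(\max_k a_k)$ residuals are uniform in $\delta$ on the relevant neighborhood of $\delta^*$, and ensure that the two small parameters $\max_k a_k$ and $\varepsilon$ can be ordered so that the singular-perturbation estimate and the stochastic-averaging estimate compose cleanly into the statements of items 1--3.
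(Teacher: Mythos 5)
Your overall route---multi-input stochastic averaging in the sense of \cite{doi:10.1137/100811738,LiuStochastic}, followed by a singular perturbation analysis of the averaged $u$--$\delta$ system using attainability of $J^{\text{ref}}$---is the same as the paper's, and your averaging bookkeeping is essentially right. One detail: for quadratic costs and odd $f_i$ the averaged $u$-equation carries \emph{no} $O(\max_k a_k)$ residual at all, since every term generated by $\tfrac12\overline{\eta}^\top A_i\overline{\eta}$ multiplied by $f_i(z_i)$ averages to zero by oddness and independence; thus $\dot{\tilde u}_i=-k_i[\overline{\ca}(\tilde\delta)\tilde u+\overline{\cb}(\tilde\delta)]_i$ exactly, and the only residual is a quadratic function $\mathcal{P}_i(a,\tilde\delta)$, vanishing at $a=\mathbf{0}_N$, appearing in the averaged $\delta$-equation.

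The genuine gap is at the hand-off from the averaged system to the stochastic conclusion. You assert ``local exponential stability of $(h(\delta^*),\delta^*)$ for the averaged system'' while treating the residual as an $O(\max_k a_k)$ perturbation, and then invoke the stochastic averaging theorem ``around an exponentially stable equilibrium of the averaged system.'' But when $a\neq\mathbf{0}_N$ the point $(h(\delta^*),\delta^*)$ is \emph{not} an equilibrium of the averaged dynamics---the $\mathcal{P}_i$ offset shifts the $\delta$-equilibrium---and robustness-to-perturbation arguments only yield practical (neighborhood) convergence, which is what \cite{tang2024deception} settles for in the deterministic case but is not sufficient here: Theorem 3 of \cite{doi:10.1137/100811738}, the result that delivers items 2 and 3, requires an exponentially stable \emph{equilibrium point} of the averaged system. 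The missing step, which the paper supplies, is to first treat the case $a=\mathbf{0}_N$ (singular perturbation plus $J^{\text{ref}}\in\mathcal{J}$ give the exponentially stable equilibrium $\zeta^*$ for all $\varepsilon\in(0,\varepsilon^*)$), and then apply the implicit function theorem to obtain a $\mathcal{C}^1$ family $g(a)$ of exponentially stable equilibria of the \emph{perturbed} averaged system for $a$ in a neighborhood $U$ of the origin; the $\zeta^*$ of the theorem is $g(a)$, $a^*$ is the radius of a ball contained in $U$ (which is why $a^*$ is quantified before $\varepsilon^*$ in the statement), and item 1 follows by shrinking $U$ so that $|\mathcal{P}_i|<\tilde\varepsilon$ along $g(a)$. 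Without this construction of an exact equilibrium of the averaged dynamics for small $a$ (or an equivalent argument), the hypothesis of the stochastic averaging theorem is unverified and items 2--3 do not follow as you state them.
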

This result establishes weak stochastic stability of $\zeta^*$ for \eqref{decgamedyn} under random perturbations generated by \eqref{ou}. Moreover, $\zeta^*$ can be made such that the costs of the deceptive players evaluated at the ``$u$" component (which itself can be made arbitrarily close to the players' actions $x$) are arbitrarily close to their desired reference values. The proof is presented in the next section.
\section{Analysis of the Deceptive NES Dynamics and Proof of Convergence}\label{sec_proof}
To prove Theorem \ref{thm}, we compute the ``average" dynamics of \eqref{decgamedyn} and apply Theorem A.3 from \cite{doi:10.1137/100811738}. First, we let $\eta(t)\in\mathbb{R}^N$ denote the vector of the stochastic components, i.e $\eta(t)=x-u$ where $x$ and $u$ are given by \eqref{decgame1} and \eqref{decgame2}. We can also define $\chi_i(t)=\eta_i(\vartheta_i t)$ and $B_i(t)=\frac{1}{\sqrt{\vartheta_i}}W_i(\vartheta_i t)$. Then, by \eqref{ou} we have
\begin{equation}
    d\chi_i(t)=-\chi_i(t)dt+q_idB_i(t),
\end{equation}
where $[B_1(t),...,B_N(t)]^\top$ is an $N$-dimensional Brownian motion on the probability space $(\Omega, \mathcal{F}, P)$. Following the notation in \cite{doi:10.1137/100811738}, we introduce the change of variable
\begin{equation}
    Z_1(t)=\chi_1(t), \ \ Z_2(t)=\chi_2(t/\overline{\vartheta}_2),..., Z_N(t)=\chi_N(t/\overline{\vartheta}_N),
\end{equation}
which will allow us to rewrite the $\dot{u}_i$ system \eqref{decgame2} in the following way:
\begin{align}\label{dui}
    \dot{u}_i=-\frac{\gamma_i k_i}{a_i}f_i(Z_i(t/\vartheta))J_i(u+\eta(t))
\end{align}
where
\begin{equation}\label{etabar}
    [\eta(t)]_k=\begin{dcases}
        a_kf_k(Z_k(t/\vartheta))+\delta_k \sum_{j\in\mathcal{D}_k}a_jf_j(Z_j(t/\vartheta)) & \text{if } k \in \mathcal{D} \\
        a_k f_k(Z_k(t/\vartheta)) & \text{else}
    \end{dcases}
\end{equation}
Since $(\chi_i(t), t\ge 0)$ is ergodic with invariant distribution $\mu_i(dz_i)=\frac{1}{q_i \sqrt{\pi}}e^{-z_i^2/q_i^2}dz_i$, it follows from Lemma A.2 in \cite{doi:10.1137/100811738} that $[Z_1(t),...,Z_N(t)]^\top$ is ergodic with invariant distribution $\mu_1\times...\times\mu_N$. Now we define state $z_i=Z_i(t/\vartheta)$ and we let $\overline{\eta}(z)\in\mathbb{R}^N$ be given by \eqref{etabar} but with $z_i$ in place of $Z_i(t/\vartheta)$, i.e
\begin{equation}
    [\overline{\eta}(z)]_k=\begin{dcases}
        a_kf_k(z_k)+\delta_k \sum_{j\in\mathcal{D}_k}a_jf_j(z_j) & \text{if } k \in \mathcal{D} \\
        a_k f_k(z_k) & \text{else}
    \end{dcases}
\end{equation}
Then the system \eqref{dui} becomes
\begin{equation}\label{dui2}
    \dot{u}_i=-\frac{\gamma_i k_i}{a_i}f_i(z_i)\left(J_i(u)+\overline{\eta}(z)^\top\nabla J_i(u)+\frac12\overline{\eta}(z)^\top A_i \overline{\eta}(z) \right)
\end{equation}
We can then use A.4 in \cite{doi:10.1137/100811738} to compute the average system of \eqref{dui2}. Given a state $u$, we use $\tilde{u}$ to denote the corresponding state in the average system. Then, by repeatedly applying Fubini's theorem, we obtain
\begin{align}
    \dot{\tilde{u}}_i&=-\frac{\gamma_i k_i}{a_i}\int_{\mathbb{R}^N}f_i(z_i)\left(J_i(\tilde{u})+\overline{\eta}(z)^\top\nabla J_i(\tilde{u})+\frac12\overline{\eta}(z)^\top A_i \overline{\eta}(z) \right) \mu_1(dz_1)\times ... \times \mu_N(dz_N)\\
    &=-\frac{\gamma_i k_i}{a_i}\int_{\mathbb{R}^N}f_i(z_i)\overline{\eta}(z)^\top \nabla J_i(\tilde{u})\left( \prod_{l=1}^N \frac{1}{q_l \sqrt{\pi}}e^{-\frac{z_l^2}{q_l^2}}\right)dz_1 ... dz_N\\
    &=-\frac{\gamma_i k_i}{a_i}\int_{\mathbb{R}^N} \left(a_i f_i^2(z_i)\nabla_i J_i(\tilde{u})+\sum_{j\in\ck_i}\tilde{\delta}_j a_i f_i^2(z_i)\nabla_j J_i(\tilde{u})\right) \left( \prod_{l=1}^N \frac{1}{q_l \sqrt{\pi}}e^{-\frac{z_l^2}{q_l^2}}\right)dz_1 ... dz_N\\
    &=-k_i\left(\nabla_i J_i(\tilde{u})+\sum_{j\in\ck_i}\tilde{\delta}_j \nabla_j J_i(\tilde{u})\right)\\
    &=-k_i [\overline{\ca}(\tilde{\delta})\tilde{u}+\overline{\cb}(\tilde{\delta})]_i
\end{align}
For the $\delta$ subsystem, we have for $i\in\mathcal{D}$:
\begin{align}
    \tilde{\delta}_i&=\varepsilon\varepsilon_i\int_{\mathbb{R}^N} \left(J_i(\tilde{u}+\overline{\eta}(z))-J_i^\text{ref}\right)\mu_1(dz_1)\times ... \times \mu_N(dz_N)\\
    &=\varepsilon\varepsilon_i\int_{\mathbb{R}^N} \left(J_i(\tilde{u})+\overline{\eta}(z)^\top \nabla J_i(\tilde{u})+\frac12\overline{\eta}(z)^\top A_i \overline{\eta}(z)-J_i^\text{ref}\right)\left( \prod_{l=1}^N \frac{1}{q_l \sqrt{\pi}}e^{-\frac{z_l^2}{q_l^2}}\right)dz_1 ... dz_N\\
    &=\varepsilon\varepsilon_i\left(J_i(\tilde{u})-J_i^\text{ref}+\int_{\mathbb{R}^N}\frac12\overline{\eta}(z)^\top A_i \overline{\eta}(z)\left(\prod_{l=1}^N \frac{1}{q_l \sqrt{\pi}}e^{-\frac{z_l^2}{q_l^2}}\right)dz_1...dz_N\right)\\
    &=\varepsilon\varepsilon_i\left(J_i(\tilde{u})-J_i^\text{ref}+\mathcal{P}_i\left(\begin{bmatrix}
        a\\ \tilde{\delta}
    \end{bmatrix}\right)\right)
\end{align}
where $a=[a_1,...,a_N]$ and $\mathcal{P}_i:\mathbb{R}^{N+n}\to \mathbb{R}$ is a quadratic function satisfying $\mathcal{P}_i\left(\begin{bmatrix}
        \mathbf{0}_N\\ \delta
    \end{bmatrix}\right)$=0. Consider the $\tilde{\zeta}$ system with $a=\mathbf{0}_N$. Since $J^\text{ref}\in\mathcal{J}$, we can apply a standard singular perturbation argument (as done in \cite{tang2024deception}) to conclude that there exists some $\varepsilon^*$ such that for all $\varepsilon\in (0, \varepsilon^*)$, this system has an exponentially stable equilibrium point of the form $\zeta^*=\begin{bmatrix}
        -\overline{\ca}(\delta^*)^{-1}\overline{\cb}(\delta^*)\\
        \delta^*
    \end{bmatrix}$ for some $\delta^*\in\Delta$. By the implicit function theorem, there exists some neighborhood $U$ of $\mathbf{0}_N$ and $\mathcal{C}^1$ function $g:U\to \mathbb{R}^{N+n}$ such that $g(a)$ is an exponentially stable equilibrium point of the $\tilde{\zeta}$ system for all $a\in U$. Moreover, $U$ can be chosen such that
    \begin{equation}
        \left\lvert\mathcal{P}_i\left(\begin{bmatrix}
            a\\ [g(a)]_{N+1}\\\vdots
            \\ [g(a)]_{N+n}
        \end{bmatrix}\right)\right\rvert<\tilde{\varepsilon}\quad \forall i\in\mathcal{D},\ \ a\in U
    \end{equation}
    which establishes item 1) of Theorem \ref{thm}. We can then define $a^*=\sup\{r>0 : \mathcal{B}_r(\mathbf{0}_N)\subset U\}$ and apply Theorem 3 from \cite{doi:10.1137/100811738} to complete the proof.
          \begin{figure}[t!]
  \centering
  \includegraphics[width=0.5\textwidth]{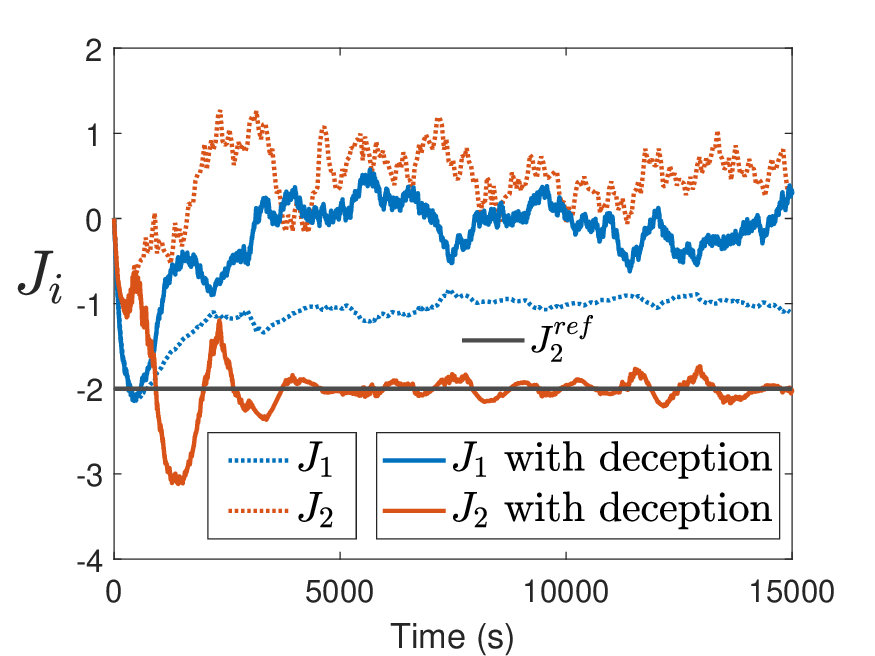}
    \includegraphics[width=0.5\textwidth]{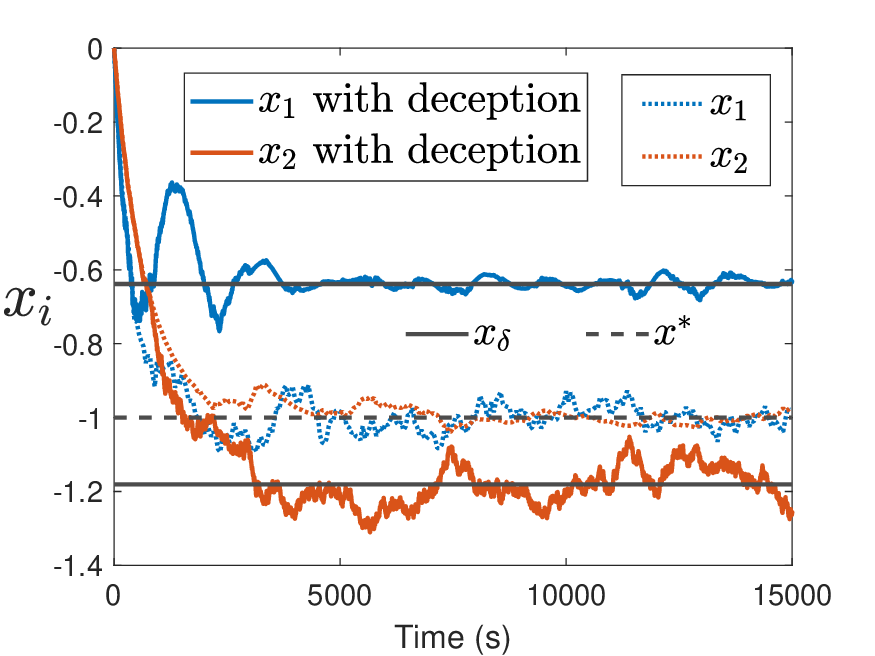}\hspace{-0.5cm}\includegraphics[width=0.5\textwidth]{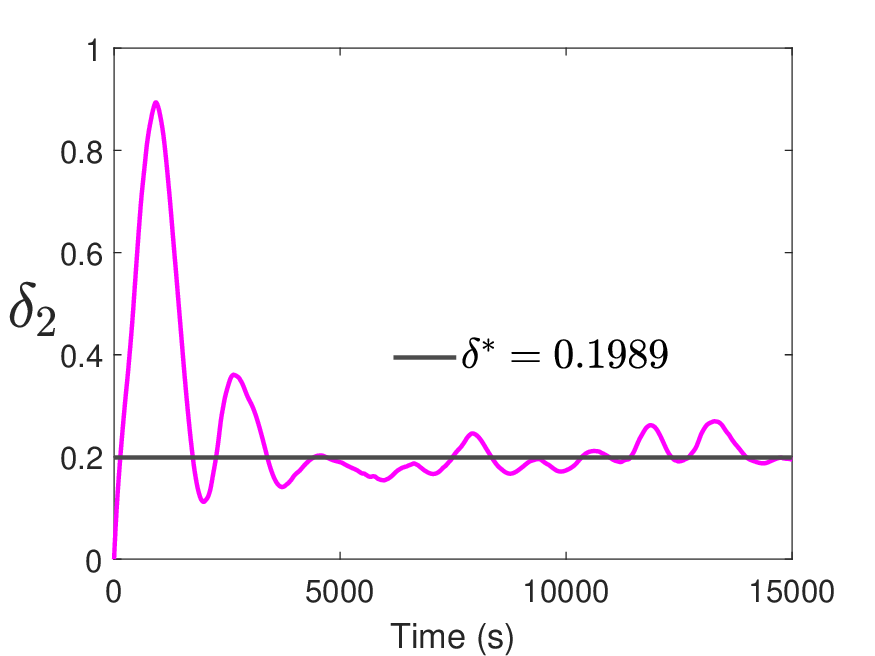}
    \caption{\small Top: Cost function convergence for the two player game from Section \ref{sec_ex} with $J_2^\text{ref}=-2$. Left: The action convergence for the two player game. Right: The corresponding trajectories of $\delta_1$ when player 2 deceives player 1.}\label{figex}
\end{figure}
    \begin{remark}
        One key difference between the above proof and the proof of the main stability result in \cite{tang2024deception} is that the approach taken in \cite{tang2024deception} treats $\mathcal{P}_i$ as an $\mathcal{O}(a)$ perturbation and applies standard robustness results for deterministic systems to establish exponential stability in a semi-global practical sense. Since the conditions of Theorem 3 from \cite{doi:10.1137/100811738} require exponential stability of an equilibrium point of the averaged system, we leverage the implicit function theorem to establish the existence of such a point for $\max_i a_i$ sufficiently small. As the averaged integrator dynamics are still perturbed, we can at best claim that the new equilibrium can be made arbitrarily close to one that achieves each deceptive players' desired payoff.
    \end{remark}
    \section{Numerical Example}\label{sec_ex}
    To further illustrate our results, we present a numerical example. Consider a two player quadratic game with costs of the form \eqref{jquad}, with parameters
    \begin{equation}
        A_1=\begin{bmatrix}
            3 & 1\\ 1& 5
        \end{bmatrix},\quad b_1=\begin{bmatrix}
            4\\2
        \end{bmatrix},\quad A_2=\begin{bmatrix}
            7&2\\2&4
        \end{bmatrix},\quad b_2=\begin{bmatrix}
            1\\6
        \end{bmatrix},\quad c_1=c_2=0.
    \end{equation}
    The pseudogradient of this game is given by 
    \begin{equation}
        \mathcal{G}(x)=\ca x+\cb,\quad\ca=\begin{bmatrix}
            3&1\\ 2&4
        \end{bmatrix},\quad \cb=\begin{bmatrix}
            4\\6
        \end{bmatrix},
    \end{equation}
    where $\ca$ clearly satisfies Assumption \ref{diag}. This game has a unique Nash equilibrium at $x^*=[-1,-1]^\top$ and steady state costs $J_1(x^*)=-1$, $J_2(x^*)=0.5$. For the stochastic NES dynamics \eqref{snes}, the players use $a_1=0.1, k_1=0.06, f_1(s)=\text{sat}(s), a_2=0.12, k_2=0.05, f_2(s)=\sin(s)$ where the saturation function $\text{sat}(\cdot)$ is given by
    \begin{equation}
        \text{sat}(s)=\begin{dcases}
            1 &\text{if } s\ge 1\\
            s &\text{if } s\in (-1, 1)\\
            -1 &\text{if }s\le -1
        \end{dcases}.
    \end{equation}
     The stochastic perturbations $\eta_i(t)$ are generated by \eqref{ou} with $\vartheta_1=0.0005, q_1=0.1, \vartheta_2=0.0004, q_2=0.07$. Now we suppose player 2 is deceptive to player 1, so player 2 implements \eqref{sdnes} and \eqref{ddelta} with $\varepsilon\varepsilon_2=0.001$ and $J_2^\text{ref}=-2$. With Lemma \ref{lemma_delta}, we obtain that $\mathcal{B}_\frac13(0)\subset\Delta$. It can be verified that $J_2^\text{ref}$ is achievable with $\delta^*=0.1989\in\Delta$, which results in a ``deceptive" Nash equilibrium of $x_\delta=-\overline{\ca}(\delta^*)\overline{\cb}(\delta^*)=[-0.6385, -1.1808]^\top$. We also have $\nabla\xi_2(\delta^*)=-12\varepsilon_2<0$ for $\varepsilon_2$, where $\xi_2$ is given in Definition \ref{jattaindef}. Hence, the closed loop stochastic deceptive NES dynamics are stable. The trajectories of the players' actions and costs are provided in Figure \ref{figex}.
    \section{Conclusion}
    In this paper, we introduced stochastic deception in Nash equilibrium seeking problems for noncooperative games with finitely many players. We leverage results on multi-input stochastic averaging to establish local exponential (in probability) convergence for the deceptive stochastic NES dynamics. We primarily consider quadratic games, as this allows us to consider a more general class of NES dynamics with an improved estimate of the ``stability preserving set" $\Delta$, but future directions will focus on extending the results to general nonlinear payoffs. Since our deception mechanism requires real-time measurements of the victim's stochastic perturbation, it is also of future interest to study methods a deceptive player could use to obtain such measurements or information in real time.

\bibliography{mybibliography}

\begin{thebibliography}{18}
\providecommand{\natexlab}[1]{#1}
\providecommand{\url}[1]{\texttt{#1}}
\expandafter\ifx\csname urlstyle\endcsname\relax
  \providecommand{\doi}[1]{doi: #1}\else
  \providecommand{\doi}{doi: \begingroup \urlstyle{rm}\Url}\fi

\bibitem[Belgioioso and Grammatico(2017)]{belgioioso2017semi}
Giuseppe Belgioioso and Sergio Grammatico.
\newblock Semi-decentralized nash equilibrium seeking in aggregative games with separable coupling constraints and non-differentiable cost functions.
\newblock \emph{IEEE control systems letters}, 1\penalty0 (2):\penalty0 400--405, 2017.

\bibitem[Frihauf et~al.(2012)Frihauf, Krstic, and Basar]{6060862}
Paul Frihauf, Miroslav Krstic, and Tamer Basar.
\newblock Nash equilibrium seeking in noncooperative games.
\newblock \emph{IEEE Transactions on Automatic Control}, 57\penalty0 (5):\penalty0 1192--1207, 2012.
\newblock \doi{10.1109/TAC.2011.2173412}.

\bibitem[Huang and Zhu(2022)]{9494340}
Linan Huang and Quanyan Zhu.
\newblock A dynamic game framework for rational and persistent robot deception with an application to deceptive pursuit-evasion.
\newblock \emph{IEEE Transactions on Automation Science and Engineering}, 19\penalty0 (4):\penalty0 2918--2932, 2022.
\newblock \doi{10.1109/TASE.2021.3097286}.

\bibitem[Kouzehgar and Badamchizadeh(2019)]{kouzehgar2019fuzzy}
Maryam Kouzehgar and Mohammad~Ali Badamchizadeh.
\newblock Fuzzy signaling game of deception between ant-inspired deceptive robots with interactive learning.
\newblock \emph{Applied Soft Computing}, 75:\penalty0 373--387, 2019.

\bibitem[Liu and Krstic(2011)]{LiuStochastic}
S.~Liu and M.~Krstic.
\newblock Stochastic {N}ash equilibrium seeking for games with general nonlinear payoffs.
\newblock \emph{SIAM J. Control Optim.}, 49\penalty0 (4):\penalty0 1659--1679, 2011.

\bibitem[Liu and Krsti\'{c}(2011)]{doi:10.1137/100811738}
Shu-Jun Liu and Miroslav Krsti\'{c}.
\newblock Stochastic nash equilibrium seeking for games with general nonlinear payoffs.
\newblock \emph{SIAM Journal on Control and Optimization}, 49\penalty0 (4):\penalty0 1659--1679, 2011.
\newblock \doi{10.1137/100811738}.
\newblock URL \url{https://doi.org/10.1137/100811738}.

\bibitem[Nash(1951)]{540b73bd-a3f1-333e-a206-c24d0fbbb8bc}
John Nash.
\newblock Non-cooperative games.
\newblock \emph{Annals of Mathematics}, 54\penalty0 (2):\penalty0 286--295, 1951.
\newblock ISSN 0003486X, 19398980.
\newblock URL \url{http://www.jstor.org/stable/1969529}.

\bibitem[Nguyen et~al.(2019)Nguyen, Wang, Sinha, and Wellman]{nguyen2019deception}
Thanh~H Nguyen, Yongzhao Wang, Arunesh Sinha, and Michael~P Wellman.
\newblock Deception in finitely repeated security games.
\newblock In \emph{Proceedings of the AAAI Conference on Artificial Intelligence}, volume~33, pages 2133--2140, 2019.

\bibitem[Pawlick et~al.(2021)Pawlick, Zhu, et~al.]{pawlick2021game}
Jeffrey Pawlick, Quanyan Zhu, et~al.
\newblock \emph{Game theory for cyber deception}.
\newblock Springer, 2021.

\bibitem[Poveda and Quijano(2015)]{PoQu15}
J.~I. Poveda and N.~Quijano.
\newblock Shahshahani gradient like extremum-seeking.
\newblock \emph{Automatica}, 58:\penalty0 51--59, 2015.

\bibitem[Santos and Li(2010)]{5339236}
Eugene Santos and Deqing Li.
\newblock On deception detection in multiagent systems.
\newblock \emph{IEEE Transactions on Systems, Man, and Cybernetics - Part A: Systems and Humans}, 40\penalty0 (2):\penalty0 224--235, 2010.
\newblock \doi{10.1109/TSMCA.2009.2034862}.

\bibitem[Smith(1987)]{smith1987deception}
Euclid~O Smith.
\newblock Deception and evolutionary biology.
\newblock \emph{Cultural Anthropology}, 2\penalty0 (1):\penalty0 50--64, 1987.

\bibitem[Stankovi\'c et~al.(2012)Stankovi\'c, Johansson, and Stipanovi\'c]{Stipanovik_ES_TAC}
M.~Stankovi\'c, K.~H. Johansson, and D.~M. Stipanovi\'c.
\newblock Distributed seeking of {N}ash equilibria with applications to mobile sensor networks.
\newblock \emph{IEEE Trans. Autom. Contr.}, 57\penalty0 (4):\penalty0 904--919, 2012.

\bibitem[Tang et~al.(2024)Tang, Javed, Chen, Krstic, and Poveda]{tang2024deception}
Michael Tang, Umar Javed, Xudong Chen, Miroslav Krstic, and Jorge~I Poveda.
\newblock Deception in nash equilibrium seeking.
\newblock \emph{arXiv preprint arXiv:2407.05168}, 2024.

\bibitem[Wagner and Arkin(2011)]{wagner2011acting}
Alan~R Wagner and Ronald~C Arkin.
\newblock Acting deceptively: Providing robots with the capacity for deception.
\newblock \emph{International Journal of Social Robotics}, 3:\penalty0 5--26, 2011.

\bibitem[Xu et~al.(2023)Xu, Wang, Hu, and Kurths]{9968117}
Wenying Xu, Zidong Wang, Guoqiang Hu, and Jürgen Kurths.
\newblock Hybrid nash equilibrium seeking under partial-decision information: An adaptive dynamic event-triggered approach.
\newblock \emph{IEEE Transactions on Automatic Control}, 68\penalty0 (10):\penalty0 5862--5876, 2023.
\newblock \doi{10.1109/TAC.2022.3226142}.

\bibitem[Ye and Hu(2017)]{ye2017distributed}
Maojiao Ye and Guoqiang Hu.
\newblock Distributed nash equilibrium seeking by a consensus based approach.
\newblock \emph{IEEE Transactions on Automatic Control}, 62\penalty0 (9):\penalty0 4811--4818, 2017.

\bibitem[Zhu et~al.(2021)Zhu, Anwar, Wan, Cho, Kamhoua, and Singh]{9508449}
Mu~Zhu, Ahmed~H. Anwar, Zelin Wan, Jin-Hee Cho, Charles~A. Kamhoua, and Munindar~P. Singh.
\newblock A survey of defensive deception: Approaches using game theory and machine learning.
\newblock \emph{IEEE Communications Surveys and Tutorials}, 23\penalty0 (4):\penalty0 2460--2493, 2021.
\newblock \doi{10.1109/COMST.2021.3102874}.

\end{thebibliography}

\end{document}